\newtheorem{theorem}{Theorem}
\theoremstyle{plain}
\newtheorem{lemma}{Lemma}
\newtheorem{proposition}{Proposition}
\newtheorem{remark}{Remark}
\numberwithin{equation}{section}
\begin{document}
\title[Derivation]{The Grillakis-Machedon-Margetis Second order corrections
to mean field evolution for weakly interacting Bosons in the case of 3-body
interactions}
\author{Xuwen Chen}
\address{University of Maryland, College Park}
\email{chenxuwen@math.umd.edu}
\thanks{ The author thanks Matei Machedon for discussions related to this
work.}
\date{}
\maketitle

\begin{abstract}
This note shows that, with a little modification, the results in \cite{GMM}
hold in the 3-body interactions case.
\end{abstract}

\section{Introduction}

\label{sec:intro}

Recently, a second-order correction to the usual tensor product (mean-field)
approximation for the Hamiltonian evolution of a many-particle system with
2-body potential was found in Grillakis-Machedon-Margetis\cite{GMM}. Here is
their main result,

\begin{theorem}
\cite{GMM}\label{GMM Result} Let $\phi $ be a smooth solution of the Hartree
equation 
\begin{equation}
i\frac{\partial \phi }{\partial t}+\Delta \phi +(v^{\prime }\ast |\phi
|^{2})\phi =0  \label{3-Hartree}
\end{equation}%
with initial conditions $\phi _{0}$ and potential $v^{\prime }$ being even.
Assuming the following:

\begin{enumerate}
\item Let $k(t,x,y)\in L_{s}^{2}(dxdy)$ for a.e. $t$, solve the
Grillakis-Machedon-Margetis equation 
\begin{equation}
(iu_{t}+ug^{T}+gu-(I+p)m)=(ip_{t}+[g,p]+u\overline{m})(1+p)^{-1}u\text{ },
\label{GMM EQ}
\end{equation}%
where products in equation \ref{GMM EQ} mean compositions of operators, and 
\begin{align*}
& u(t,x,y):=\sinh (k):=k+\frac{1}{3!}k\overline{k}k+\ldots ~, \\
& \delta (x-y)+p(t,x,y):=\cosh (k):=\delta (x-y)+\frac{1}{2!}k\overline{k}%
+\ldots ~, \\
& g(t,x,y):=-\Delta _{x}\delta (x-y)-v(x-y)\phi (t,x)\overline{\phi }%
(t,y)-(v\ast |\phi |^{2})(t,x)\delta (x-y)~, \\
& m(t,x,y):=v(x-y)\overline{\phi }(t,x)\overline{\phi }(t,y)~.
\end{align*}

\item Functions $f(t):=\Vert e^{B}[A,V^{\prime }]e^{-B}\Omega \Vert _{{%
\mathcal{F}}}$ and $g(t):=\Vert e^{B}V^{\prime }e^{-B}\Omega \Vert _{{%
\mathcal{F}}}$ are locally integrable, where 
\begin{equation}
B(t)=\frac{1}{2}\int \left( k(t,x,y)a_{x}a_{y}-\overline{k}%
(t,x,y)a_{x}^{\ast }a_{y}^{\ast }\right) dxdy  \label{B-Def}
\end{equation}%
and ${\mathcal{F}}$ the Fock space defined in \cite{GMM}.

\item $\int d(t,x,x)\ dx$ is locally integrable in time, where 
\begin{align}
d(t,x,y)=& \left( i\sinh (k)_{t}+\sinh (k)g^{T}+g\sinh (k)\right) \overline{%
\sinh (k)}  \label{D} \\
-& \left( i\cosh (k)_{t}+[g,\cosh (k)]\right) \cosh (k)  \notag \\
-& \sinh (k)\overline{m}\cosh (k)-\cosh (k)m\overline{\sinh (k)}~.  \notag
\end{align}%
Then, there exist real functions $\chi _{0}$, $\chi _{1}$ such that%
\begin{align}
& \Vert e^{-\sqrt{N}A(t)}e^{-B(t)}e^{-i\int_{0}^{t}(N\chi _{0}(s)+\chi
_{1}(s))ds}\Omega -e^{itH_{N}^{\prime }}e^{-\sqrt{N}A(0)}\Omega \Vert _{{%
\mathcal{F}}}  \label{GMM estimate} \\
& \leq \frac{\int_{0}^{t}f(s)ds}{\sqrt{N}}+\frac{\int_{0}^{t}g(s)ds}{N}\text{
},  \notag
\end{align}

where $\Omega =(1,0,0,\cdots )\in {\mathcal{F}}$ is the vacuum state, 
\begin{equation}
A(\phi )=a(\overline{\phi })-a^{\ast }(\phi )  \label{A}
\end{equation}%
with $a$ and $a^{\ast }$ are the annihilation operator and creation
operator, and the Hamiltonian 
\begin{eqnarray*}
H_{N}^{\prime } &=&\int a_{x}^{\ast }\Delta a_{x}dx+\frac{1}{2N}\int
v^{\prime }(x-y)a_{x}^{\ast }a_{y}^{\ast }a_{x}a_{y}\ dxdy \\
&=&:H_{0}+\frac{1}{2N}V^{\prime }\text{ }.
\end{eqnarray*}
\end{enumerate}
\end{theorem}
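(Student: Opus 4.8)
The plan is to carry out, in the Fock space $\mathcal F$, the comparison of \cite{GMM} between the true evolution and the explicit approximate state
\[
u_N(t):=e^{-\sqrt N A(t)}\,e^{-B(t)}\,e^{-i\int_0^t\left(N\chi _0(s)+\chi _1(s)\right)ds}\,\Omega ,
\]
with $\chi _0,\chi _1$ still to be chosen. Fix the initial condition $k(0)=0$ for (\ref{GMM EQ}), so that $e^{-B(0)}\Omega =\Omega $ and $u_N(0)=e^{-\sqrt N A(0)}\Omega $. Since $e^{-itH_N^{\prime }}$ is unitary, the quantity on the left of (\ref{GMM estimate}) equals $\bigl\Vert \int_0^t e^{-isH_N^{\prime }}\bigl(\partial _s-iH_N^{\prime }\bigr)u_N(s)\,ds\bigr\Vert _{\mathcal F}$; hence it is enough to bound $\bigl\Vert \bigl(\partial _t-iH_N^{\prime }\bigr)u_N(t)\bigr\Vert _{\mathcal F}$ pointwise in $t$ by $f(t)/\sqrt N+g(t)/N$ and integrate.

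To do that, conjugate the transformations through. Because $e^{-\sqrt N A}e^{-B}$ is unitary and the last factor in $u_N$ is a scalar phase, $\bigl\Vert \bigl(\partial _t-iH_N^{\prime }\bigr)u_N(t)\bigr\Vert _{\mathcal F}=\Vert \mathcal M(t)\Omega \Vert _{\mathcal F}$ with
\[
\mathcal M(t):=e^{B}e^{\sqrt N A}\bigl(-iH_N^{\prime }\bigr)e^{-\sqrt N A}e^{-B}+e^{B}e^{\sqrt N A}\,\partial _t\!\bigl(e^{-\sqrt N A}e^{-B}\bigr)-i\bigl(N\chi _0+\chi _1\bigr).
\]
Here $e^{\sqrt N A}$ is the Weyl operator implementing the coherent shift $a_x\mapsto a_x+\sqrt N\phi (t,x)$, while $e^{B}$ --- well defined and unitary on $\mathcal F$ because $k(t)\in L_s^2$ --- is a Bogoliubov transformation whose conjugation is governed by the kernels $\cosh k=\delta +p$ and $\sinh k=u$ of hypothesis (1). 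Conjugating $-iH_N^{\prime }=-iH_0-\tfrac{i}{2N}V^{\prime }$ by $e^{\sqrt N A}$ and expanding by Baker--Campbell--Hausdorff --- a \emph{finite} expansion, since commuting with the degree-one operator $A$ lowers the number of creation/annihilation operators --- produces terms of orders $N,\,N^{1/2},\,1,\,N^{-1/2},\,N^{-1}$; the connection term $e^{B}e^{\sqrt N A}\partial_t(e^{-\sqrt N A}e^{-B})$ adds an $O(N^{1/2})$ operator linear in $a,a^{\ast }$ with kernel $\partial_t\phi$ (from $\partial_t e^{-\sqrt N A}$) and an $O(1)$ quadratic operator with kernel built from $k_t$, i.e.\ from $u_t$ and $p_t$ (from $\partial_t e^{-B}$). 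The final outer conjugation by $e^{B}$ preserves the creation/annihilation degree, so it does not disturb this $N$-bookkeeping.

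One now removes, in turn, the terms of size $N$, $N^{1/2}$ and $1$ in $\mathcal M(t)\Omega $. The $O(N)$ part is a scalar multiple of $\Omega $, and $\chi _0(t)$ is \emph{defined} to be it (essentially the Hartree energy density, $\tfrac12\int v^{\prime }(x-y)|\phi (t,x)|^2|\phi (t,y)|^2\,dxdy$ plus the kinetic contribution), so that $-iN\chi _0$ cancels it. The $O(N^{1/2})$ part is linear in $a,a^{\ast }$ with kernel a $\cosh k,\sinh k$ combination of $i\partial _t\phi +\Delta \phi +(v^{\prime }\ast |\phi |^2)\phi $ and its conjugate; since the Bogoliubov transformation is invertible this vanishes identically precisely because $\phi $ solves the Hartree equation (\ref{3-Hartree}). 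The $O(1)$ part is quadratic, so on $\Omega $ only its scalar part and its creation--creation ($a_x^{\ast }a_y^{\ast }$) part survive. Assembling the $e^{B}$-conjugate of $-i\bigl(H_0+\tfrac14[A,[A,V^{\prime }]]\bigr)$ with the connection term $e^{B}\partial_t(e^{-B})$, one finds that the vanishing of the $a_x^{\ast }a_y^{\ast }$-kernel is exactly the Grillakis--Machedon--Margetis equation (\ref{GMM EQ}); after imposing it, the residual scalar equals, up to a constant, $\int d(t,x,x)\,dx$ with $d$ as in (\ref{D}), and $\chi _1(t)$ (a real function, by hypothesis (3)) is defined to cancel it. Hypotheses (1) and (3) are precisely what guarantee that (\ref{GMM EQ}) has a solution of the stated regularity and that $\chi _1\in L^1_{\mathrm{loc}}$, so that $\int_0^t\chi _1$ is meaningful.

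What is then left of $\mathcal M(t)\Omega $ is the $O(N^{-1/2})$ cubic term $\tfrac{1}{2\sqrt N}\,e^{B}[A,V^{\prime }]e^{-B}\Omega $ and the $O(N^{-1})$ quartic term $\tfrac{1}{2N}\,e^{B}V^{\prime }e^{-B}\Omega $ --- the pieces $\tfrac{1}{2\sqrt N}[A,V^{\prime }]$ and $\tfrac{1}{2N}V^{\prime }$ of the Baker--Campbell--Hausdorff string, conjugated by $e^{B}$ --- so by hypothesis (2) one gets $\Vert \mathcal M(t)\Omega \Vert _{\mathcal F}\leq f(t)/\sqrt N+g(t)/N$, and integration yields (\ref{GMM estimate}). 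The main obstacle is the algebra behind (\ref{GMM EQ}): one must collect every $O(1)$ quadratic contribution --- the conjugated kinetic operator, the double commutator $[A,[A,V^{\prime }]]$, and the connection kernel coming from $\partial_t e^{-B}$ --- and verify that the resulting creation--creation kernel is precisely the difference of the two sides of (\ref{GMM EQ}), with residual scalar $\int d(t,x,x)\,dx$; the coherent-state and Bogoliubov algebra, the termination of the commutator expansion, and the unitarity of $e^{B}$ under $k\in L_s^2$ are routine. For the three-body interaction of the present note the scheme is unchanged: $V^{\prime }$ becomes a three-body operator (with $\tfrac1{2N}$ replaced by the normalisation that keeps it $O(1)$ per particle), the commutator string gains two more terms, the Hartree equation (\ref{3-Hartree}) and the kernels $g,m$ --- hence also $d$ and (\ref{GMM EQ}) --- acquire the corresponding three-body corrections, and every additional remainder is $O(N^{-1/2})$ or smaller; so, with hypotheses (2)--(3) restated for these modified quantities, the estimate (\ref{GMM estimate}) continues to hold.
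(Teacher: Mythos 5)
Your proposal is correct and follows essentially the same route as the paper's (and \cite{GMM}'s) argument: conjugate the generator by the coherent-state and Bogoliubov transformations, use the finite termination of the $[A,\cdot]$ expansion to sort terms by powers of $N$, let the Hartree equation kill the $O(N^{1/2})$ linear part and the equation for $k$ kill the $a_x^{\ast}a_y^{\ast}$ part of the $O(1)$ quadratic part, absorb the scalars into $\chi_0,\chi_1$, and bound the surviving $\frac{1}{2\sqrt N}e^{B}[A,V']e^{-B}\Omega$ and $\frac{1}{2N}e^{B}V'e^{-B}\Omega$ by hypothesis (2). Your Duhamel formulation of the final step is just the unitary transpose of the paper's energy estimate for $e^{i\int_0^t(N\chi_0+\chi_1)}\Psi-\Omega$, so no substantive difference.
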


\bigskip To prove Theorem \ref{GMM Result}, the key is to notice that, for $%
\phi $ and $k$ satisfying the assumptions, there is some "nice" Hermitian $%
\widetilde{L}$ such that 
\begin{equation*}
\frac{1}{i}\frac{\partial }{\partial t}\Psi =(\widetilde{L}-N\chi _{0}-\chi
_{1})\Psi \text{ },
\end{equation*}%
and%
\begin{equation*}
\Vert \widetilde{L}\Omega \Vert _{{\mathcal{F}}}\leq N^{-1/2}\Vert
e^{B}[A,V]e^{-B}\Omega \Vert _{{\mathcal{F}}}+N^{-1}\Vert e^{B}Ve^{-B}\Omega
\Vert _{{\mathcal{F}}}\text{ ,}
\end{equation*}%
where 
\begin{equation*}
\Psi (t)=e^{B(t)}e^{\sqrt{N}A(t)}e^{itH}e^{-\sqrt{N}A(0)}\Omega \text{ }.
\end{equation*}
Then applying the energy estimate to 
\begin{equation*}
\left( \frac{1}{i}\frac{\partial }{\partial t}-\widetilde{L}\right)
(e^{i\int_{0}^{t}(N\chi _{0}(s)+\chi _{1}(s))ds}\Psi -\Omega )=\widetilde{L}%
\Omega ~,
\end{equation*}%
would give the estimate \ref{GMM estimate}.

With the same set up, this result can be extended to the case of 3-body
interaction potentials.

\section{Main Result}

\label{sec:Main Result}

We consider the Fock space Hamiltonian $H_{N}:{\mathcal{F}}\rightarrow {%
\mathcal{F}}$ with the 3-body interaction potential defined by 
\begin{align}
H_{N}& =\int a_{x}^{\ast }\Delta a_{x}dx+\frac{1}{6N^{2}}\int
v(x-y,y-z)a_{x}^{\ast }a_{y}^{\ast }a_{z}^{\ast }a_{x}a_{y}a_{z}\ dx\,dy
\label{3H} \\
& =:H_{0}+\frac{1}{6N^{2}}V~,  \notag
\end{align}%
in correspondence to the Hamiltonian 
\begin{equation*}
H_{N,N}=\sum_{j=1}^{n}\Delta _{x_{j}}+\frac{1}{N^{2}}%
\sum_{i<j<k}^{n}v(x_{i}-x_{j},x_{j}-x_{k})~.
\end{equation*}%
We assume that $v$ is symmetric in $x$, $y$, and $z$. As shown in (1.2) and
(1.3) in Chen-Pavlovi\'{c}\cite{TChenAndNP}, every translation invariant
3-body potential can be written in the form $v(x-y,y-z)$. However, I don't
assume $v\geqslant 0$ as in \cite{TChenAndNP}.

\begin{theorem}
\label{main} With the same assumptions as in theorem \ref{GMM Result} with $%
H_{N}^{\prime }$ replaced by $H_{N}$, $V^{\prime }$ by $V$ in formula \ref%
{3H}. We change equation \ref{3-Hartree} to 
\begin{equation}
i\frac{\partial }{\partial t}\phi +\triangle \phi +\frac{1}{2}\phi \int
v(x-y,y-z)\left\vert \phi (y)\right\vert ^{2}\left\vert \phi (z)\right\vert
^{2}dydz=0  \label{5-Hartree}
\end{equation}

and 
\begin{align}
& g(t,x,y):=-\triangle \delta _{12}-\bigg(\int v_{12,23}\left\vert \phi
_{3}\right\vert ^{2}dz\bigg)\overline{\phi }_{1}\phi _{2}-\frac{1}{2}\bigg(%
\int v_{12,23}\left\vert \phi _{2}\right\vert ^{2}\left\vert \phi
_{3}\right\vert ^{2}dydz\bigg)\delta _{12}  \notag \\
& m(t,x,y):=\bigg(\int v_{12,23}\left\vert \phi _{3}\right\vert ^{2}dz\bigg)%
\overline{\phi }_{1}\overline{\phi }_{2}  \notag
\end{align}%
where $v_{12,23}\left\vert \phi _{3}\right\vert ^{2}$ is an abbreviation for
the product $v(x-y,y-z)\left\vert \phi (z)\right\vert ^{2}$,

Further, we assume that the functions 
\begin{equation*}
h(t):=\Vert e^{B}[A,[A,V]]e^{-B}\Omega \Vert _{{\mathcal{F}}}
\end{equation*}%
and%
\begin{equation*}
i(t):=\Vert e^{B}[A,[A,[A,V]]]e^{-B}\Omega \Vert _{{\mathcal{F}}}
\end{equation*}%
are also locally integrable.

Then, there exist real functions $\chi _{0}$, $\chi _{1}$ such that 
\begin{align*}
& \Vert e^{-\sqrt{N}A(t)}e^{-B(t)}e^{-i\int_{0}^{t}(N\chi _{0}(s)+\chi
_{1}(s))ds}\Omega -e^{itH_{N}}e^{-\sqrt{N}A(0)}\Omega \,\Vert _{{\mathcal{F}}%
} \\
& \leq \frac{\int_{0}^{t}f(s)ds}{6N^{\frac{3}{2}}}+\frac{\int_{0}^{t}g(s)ds}{%
6N^{2}}+\frac{\int_{0}^{t}h(s)ds}{12N}+\frac{\int_{0}^{t}i(s)ds}{36N^{\frac{1%
}{2}}}~.
\end{align*}
\end{theorem}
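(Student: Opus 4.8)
The plan is to mimic the strategy recalled after Theorem \ref{GMM Result}, namely to conjugate the Fock-space evolution $e^{itH_N}$ by $e^{\sqrt N A(t)}$ and $e^{B(t)}$ and show that the resulting generator differs from a self-adjoint operator $\widetilde L$ by the scalar $N\chi_0+\chi_1$, with $\|\widetilde L\Omega\|_{\mathcal F}$ controlled by the four norms appearing in the statement. Concretely, set $\Psi(t)=e^{B(t)}e^{\sqrt N A(t)}e^{itH_N}e^{-\sqrt N A(0)}\Omega$ and compute $\tfrac1i\partial_t\Psi$. The new feature relative to \cite{GMM} is that $V$ is now a \emph{cubic} interaction (a degree-six monomial in $a,a^\ast$), so the Hadamard/BCH expansion of $e^{\sqrt N A}\,V\,e^{-\sqrt N A}$ in powers of $\sqrt N$ terminates only after three commutators:
\begin{equation*}
e^{\sqrt N A}Ve^{-\sqrt N A}=V+\sqrt N[A,V]+\frac{N}{2}[A,[A,V]]+\frac{N^{3/2}}{6}[A,[A,[A,V]]]+\frac{N^2}{24}[A,[A,[A,[A,V]]]],
\end{equation*}
and the quadruple commutator $[A,[A,[A,[A,V]]]]$ is a scalar (a $c$-number) because each commutator with $A=a(\bar\phi)-a^\ast(\phi)$ lowers the degree of the monomial by one. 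Dividing by $6N^2$, the leading $N$-independent piece of the new generator is $\tfrac{1}{144}[A,[A,[A,[A,V]]]]=:N\chi_0^{V}$ plus lower order, which together with the $A$-independent contributions $\tfrac1i\dot B+\,$(conjugated $H_0$) reproduces exactly the structure in \cite{GMM}; the only change is bookkeeping of which power of $N$ multiplies each commutator.

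\emph{Key steps, in order.} First I would record the algebra of commutators: $[A,a_x^\ast]$, $[A,a_x]$, and hence $[A,V]$, $[A,[A,V]]$, $[A,[A,[A,V]]]$ as explicit normal-ordered polynomials in $a,a^\ast$ with coefficients built from $\phi$ and $v(x-y,y-z)$; verify the quadruple commutator is scalar. Second, differentiate $\Psi$: the $e^{itH_N}$ factor produces $e^{B}e^{\sqrt N A}H_N e^{-\sqrt N A}e^{-B}$, and $H_N=H_0+\tfrac1{6N^2}V$; expand $e^{\sqrt N A}H_0e^{-\sqrt N A}$ (this part is unchanged from \cite{GMM} since $H_0$ is quadratic) and $e^{\sqrt N A}Ve^{-\sqrt N A}$ as above; then conjugate everything by $e^{B}$ using the Bogoliubov rule $e^{B}a_x e^{-B}=\int(\cosh(k))a_y\,dy+\int(\sinh(k))a_y^\ast dy$ — i.e.\ the $u$ and $p$ of the hypotheses. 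Third, collect the pieces that are quadratic in $a,a^\ast$: these are precisely the ones forced to vanish (up to scalars $\chi_0,\chi_1$) by the Hartree equation \eqref{5-Hartree} and the GMM equation \eqref{GMM EQ}; here one checks that the modified $g$ and $m$ in the statement are exactly what make the quadratic terms coming from the \emph{cubic} potential (after contracting one or two pairs against $|\phi|^2$) cancel against $\dot B$ and the conjugated $H_0$. This is the computational heart: the definitions of $g$ and $m$ have been rigged so that the ``mean-field'' and ``pairing'' parts of the cubic interaction — namely $\int v_{12,23}|\phi_3|^2$ giving an effective 2-body-looking term and $\int v_{12,23}|\phi_2|^2|\phi_3|^2$ giving an effective potential — slot into the same equations as in \cite{GMM}. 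Fourth, define $\widetilde L$ to be what remains after subtracting $N\chi_0+\chi_1$; by construction $\widetilde L$ is self-adjoint and $\tfrac1i\partial_t\Psi=(\widetilde L-N\chi_0-\chi_1)\Psi$. Fifth, estimate $\|\widetilde L\Omega\|_{\mathcal F}$: all surviving terms carry at least one creation operator acting nontrivially on $\Omega$, and tracking the $N$-powers, the term with $[A,V]$ contributes $O(N^{-3/2})$, with $[A,[A,V]]$ contributes $O(N^{-1})$ — wait, the $\tfrac1{6N^2}$ prefactor against $\tfrac{N}{2}[A,[A,V]]$ gives $\tfrac1{12N}$, against $\tfrac{N^{3/2}}{6}[A,[A,[A,V]]]$ gives $\tfrac1{36N^{1/2}}$, and $V$ itself against $\tfrac1{6N^2}$ gives $\tfrac1{6N^2}$, matching the four terms in the claimed bound with $f=\|e^B[A,V]e^{-B}\Omega\|$, $g=\|e^BVe^{-B}\Omega\|$, $h=\|e^B[A,[A,V]]e^{-B}\Omega\|$, $i=\|e^B[A,[A,[A,V]]]e^{-B}\Omega\|$. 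Sixth and last, run the energy estimate exactly as in \cite{GMM}: apply $\bigl(\tfrac1i\partial_t-\widetilde L\bigr)$ to $e^{i\int_0^t(N\chi_0+\chi_1)ds}\Psi-\Omega$, obtain $\widetilde L\Omega$ on the right, and since $\widetilde L$ is self-adjoint, $\tfrac{d}{dt}\|\cdot\|_{\mathcal F}\le\|\widetilde L\Omega\|_{\mathcal F}$; integrate in $t$, using $\Psi(0)=\Omega$, to get the stated inequality.

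\emph{Main obstacle.} The bookkeeping in step three — verifying that the quadratic-in-$(a,a^\ast)$ part of $e^{B}e^{\sqrt N A}\,\tfrac1{6N^2}V\,e^{-\sqrt N A}e^{-B}$, after all contractions, combines with $\dot B$ and the conjugated kinetic term to leave only scalars — is where everything has to line up; this is exactly the role of the carefully chosen $g$ and $m$ in the statement, and one must check that the factor $\tfrac12$ in \eqref{5-Hartree} and in the definitions of $g,m$ is the correct combinatorial constant coming from the number of ways to contract the six operators $a_x^\ast a_y^\ast a_z^\ast a_x a_y a_z$ down to a quadratic expression (one contraction: choose which of two remaining $a^\ast$'s and $a$'s survive; two contractions: a pure scalar times nothing — no, two contractions leaves a quadratic piece as well, etc.). Once that cancellation is confirmed, the remaining steps are formally identical to \cite{GMM} with the powers of $N$ adjusted. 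A secondary, purely technical point is justifying the termination of the BCH series and the manipulations at the level of operators on $\mathcal F$ acting on $\Omega$, which is handled exactly as in \cite{GMM} under the stated local-integrability hypotheses on $f,g,h,i$ and on $\int d(t,x,x)\,dx$.
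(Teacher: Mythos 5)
Your overall strategy (conjugate by $e^{\sqrt N A}$ and then by $e^{B}$, identify a Hermitian $\widetilde L$ with $\frac1i\partial_t\Psi=(\widetilde L-N\chi_0-\chi_1)\Psi$, estimate $\Vert\widetilde L\Omega\Vert_{\mathcal F}$, and close with the energy estimate) is exactly the paper's, and your bookkeeping of the four error terms with prefactors $\frac{1}{6N^2}$, $\frac{1}{6N^{3/2}}$, $\frac{1}{12N}$, $\frac{1}{36N^{1/2}}$ is correct. But there is a genuine error at the outset that infects the structure of the argument: $V=\int v(x-y,y-z)\,a_x^{\ast}a_y^{\ast}a_z^{\ast}a_xa_ya_z$ is a monomial of degree \emph{six} in $a,a^{\ast}$, so the expansion $e^{\sqrt N A}Ve^{-\sqrt N A}=\sum_{j=0}^{6}\frac{N^{j/2}}{j!}(\mathrm{ad}_A)^{j}V$ terminates only after \emph{six} commutators, not four. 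Your displayed formula drops the terms $\frac{N^{5/2}}{5!}(\mathrm{ad}_A)^{5}V$ and $\frac{N^{3}}{6!}(\mathrm{ad}_A)^{6}V$, and your claim that $[A,[A,[A,[A,V]]]]$ is a $c$-number is false: it is quadratic in $a,a^{\ast}$, with $a_ya_z$, $a_y^{\ast}a_z^{\ast}$, $a_z^{\ast}a_y$ and $a_z^{\ast}a_z$ parts (Lemma \ref{Hartree-F} computes it explicitly). The consequences are substantive. First, the Hartree equation \ref{5-Hartree} is not what kills quadratic terms; it is precisely the cancellation of the \emph{linear} terms, $\frac1i\dot A+[A,H_0]+\frac{1}{6\cdot 5!}(\mathrm{ad}_A)^{5}V=0$, i.e.\ it comes from the fifth commutator you omitted. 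Second, the scalar $N\chi_0=\frac N3\int v\,|\phi(x)|^2|\phi(y)|^2|\phi(z)|^2$ arises from the sixth commutator combined with $\frac N2[A,\dot A]+\frac N2[A,[A,H_0]]$, not from $\frac{1}{144}[A,[A,[A,[A,V]]]]$ as you assert. Third, and this is the heart of the theorem, the quadratic operator $\frac{1}{144}[A,[A,[A,[A,V]]]]$ is exactly the object that replaces $\frac12[A,[A,V']]$ of the two-body case: its $a^{\ast}a^{\ast}$ and $aa$ parts are what $B$ must annihilate, and reading off its coefficients is what produces the modified $g$ and $m$ in the statement and the unchanged form of the $k$-equation \ref{GMM EQ}. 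Your step three gestures at this cancellation, but it contradicts your step one: if $[A,[A,[A,[A,V]]]]$ were a scalar there would be nothing for $B$, $g$, $m$ to do. To repair the argument, restore the full six-term expansion, derive \ref{5-Hartree} from the linear terms and $\chi_0$ from the scalar term, and match the quadratic part of $\frac{1}{144}[A,[A,[A,[A,V]]]]$ to ${\mathcal I}(G+M)$ as in Section \ref{sec:k-eq}; your remaining steps then go through as stated.
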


\begin{proof}
Similar to the proof of Theorem \ref{GMM Result}.
\end{proof}

\begin{remark}
Recall that, the equation for $k$ is \ref{GMM EQ}%
\begin{equation}
(iu_{t}+ug^{T}+gu-(I+p)m)=(ip_{t}+[g,p]+u\overline{m})(1+p)^{-1}u\text{ }.
\label{k-equation}
\end{equation}

A derivation of this equation will be given in Section \ref{sec:k-eq}.
\end{remark}

\begin{remark}
In Section \ref{sec:k-eq}, we will show that the $\widetilde{L}$ in this
case is 
\begin{align}
\widetilde{L}=& H_{0}+\int v_{12,23}\left\vert \phi _{3}\right\vert ^{2}%
\overline{\phi }_{2}\phi _{1}a_{x}^{\ast }a_{y}dxdydz+  \label{New L tutar}
\\
& \frac{1}{2}\int v_{12,23}\left\vert \phi _{2}\right\vert ^{2}\left\vert
\phi _{3}\right\vert ^{2}a_{x}^{\ast }a_{x}dxdydz  \notag \\
-& \int d(t,x,y)a_{y}^{\ast }a_{x}dx+\frac{1}{6N^{2}}e^{B}Ve^{-B}+  \notag \\
& \frac{1}{6N^{3/2}}e^{B}[A,V]e^{-B}+\frac{1}{12N}e^{B}[A,[A,V]]e^{-B} 
\notag \\
& +\frac{1}{36N^{\frac{1}{2}}}e^{B}[A,[A,[A,V]]]e^{-B}\text{ },  \notag
\end{align}

where $d(t,x,y)$ is the same as equation \ref{D}.
\end{remark}

\section{The New Hartree equation}

\label{sec:hartree}

In this section we derive the Hartree equation \ref{5-Hartree} for the
one-particle wave function $\phi $ as needed in Theorem \ref{sec:Main Result}%
.

\begin{lemma}
\label{Hartree-F} The following relations hold, where $A$ denotes $A(\phi )$%
, and $A$ , $V$ are defined by formulas \ref{A} and \ref{3H}): 
\begin{eqnarray*}
&&[A,V] \\
&=&3\int v(x-y,y-z)(\overline{\phi }(x)a_{y}^{\ast }a_{z}^{\ast
}a_{x}a_{y}a_{z}+\phi (x)a_{x}^{\ast }a_{y}^{\ast }a_{z}^{\ast
}a_{y}a_{z})dxdydz
\end{eqnarray*}%
\begin{eqnarray*}
&&[A,[A,V]] \\
&=&6\int v(x-y,y-z)(\overline{\phi }(x)\overline{\phi }(y)a_{z}^{\ast
}a_{x}a_{y}a_{z}+2\phi (x)\overline{\phi }(y)a_{x}^{\ast }a_{z}^{\ast
}a_{y}a_{z} \\
&&+\phi (x)\phi (y)a_{x}^{\ast }a_{y}^{\ast }a_{z}^{\ast }a_{y}a_{z})dxdydz
\\
&&+6\int v(x-y,y-z)\left\vert \phi (x)\right\vert ^{2}a_{y}^{\ast
}a_{z}^{\ast }a_{y}a_{z}dxdydz
\end{eqnarray*}%
\begin{eqnarray*}
&&[A,[A,[A,V]]] \\
&=&36\int v(x-y,y-z)\left\vert \phi (x)\right\vert ^{2}(\overline{\phi }%
(y)a_{z}^{\ast }a_{y}a_{z}+\phi (y)a_{y}^{\ast }a_{z}^{\ast }a_{z})dxdydz \\
&&+6\int v(x-y,y-z)(\overline{\phi }(x)\overline{\phi }(y)\overline{\phi }%
(z)a_{x}a_{y}a_{z}+\phi (x)\phi (y)\phi (z)a_{x}^{\ast }a_{y}^{\ast
}a_{z}^{\ast })dxdydz \\
&&+18\int v(x-y,y-z)(\overline{\phi }(x)\overline{\phi }(y)\phi
(z)a_{z}^{\ast }a_{x}a_{y}+\phi (x)\phi (y)\overline{\phi }(z)a_{x}^{\ast
}a_{y}^{\ast }a_{z})dxdydz
\end{eqnarray*}%
\begin{eqnarray*}
&&[A,[A,[A,[A,V]]]] \\
&=&72\int v(x-y,y-z)\left\vert \phi (x)\right\vert ^{2}(\overline{\phi }(y)%
\overline{\phi }(z)a_{y}a_{z}+\phi (y)\phi (z)a_{y}^{\ast }a_{z}^{\ast
})dxdydz \\
&&+144\int v(x-y,y-z)\left\vert \phi (x)\right\vert ^{2}\overline{\phi }%
(y)\phi (z)a_{z}^{\ast }a_{y}dxdydz \\
&&+72\int v(x-y,y-z)\left\vert \phi (x)\right\vert ^{2}\left\vert \phi
(y)\right\vert ^{2}a_{z}^{\ast }a_{z}dxdydz
\end{eqnarray*}%
\begin{eqnarray*}
&&[A,[A,[A,[A,[A,V]]]]] \\
&=&360\int v(x-y,y-z)\left\vert \phi (x)\right\vert ^{2}\left\vert \phi
(y)\right\vert ^{2}(\overline{\phi }(z)a_{z}+\phi (z)a_{z}^{\ast })dxdydz
\end{eqnarray*}%
\begin{eqnarray*}
&&[A,[A,[A,[A,[A,[A,V]]]]]] \\
&=&720\int v(x-y,y-z)\left\vert \phi (x)\right\vert ^{2}\left\vert \phi
(y)\right\vert ^{2}\left\vert \phi (z)\right\vert ^{2}dxdydz
\end{eqnarray*}
\end{lemma}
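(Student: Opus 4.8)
The plan is to compute the iterated commutators $[A,V]$, $[A,[A,V]]$, $\ldots$ directly from the definition $A(\phi)=a(\overline\phi)-a^{\ast}(\phi)$ and the expression $V=\int v(x-y,y-z)a_x^{\ast}a_y^{\ast}a_z^{\ast}a_xa_ya_z\,dx\,dy\,dz$, using only the canonical commutation relations $[a_x,a_y^{\ast}]=\delta(x-y)$, $[a_x,a_y]=0$, $[a_x^{\ast},a_y^{\ast}]=0$, together with the symmetry of $v$ in its three arguments. The basic mechanism is that each application of $A$ is a first-order differential operator on the "number of creation/annihilation operators": commuting $a(\overline\phi)$ past one $a_z^{\ast}$ produces a factor $\overline\phi(z)$ and deletes that $a_z^{\ast}$, while commuting $-a^{\ast}(\phi)$ past one $a_x$ produces $\phi(x)$ and deletes that $a_x$. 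So at each stage one sums over which of the remaining daggered operators is hit (giving the $\overline\phi$ terms) and which of the remaining undaggered operators is hit (giving the $\phi$ terms), and the symmetry of $v$ collapses the resulting sum of structurally identical terms into a single term with an integer multiplicity.

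Concretely, I would proceed by induction on the number of commutators, at each step recording the answer in the "normal-ordered-looking" form displayed in the lemma (all surviving $a^{\ast}$'s to the left, all $a$'s to the right — which is automatic here because $A$ only ever removes operators, never reorders the survivors). For the first bracket: $a(\overline\phi)$ can hit any of the three $a^{\ast}$'s, and by symmetry of $v$ those three contributions are equal, giving $3\int v\,\overline\phi(x)\,a_y^{\ast}a_z^{\ast}a_xa_ya_z$; similarly $-a^{\ast}(\phi)$ hitting any of the three $a$'s gives $3\int v\,\phi(x)\,a_x^{\ast}a_y^{\ast}a_z^{\ast}a_ya_z$. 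For the second bracket one applies $A$ to each of those two terms and bookkeeps the new coefficients ($2\times3$, with the cross-term $2\phi\overline\phi$ arising because $a(\overline\phi)$ can hit the $a_x^{\ast}$ of the $\phi(x)a_x^{\ast}\cdots$ term in two symmetry-equivalent ways while $-a^{\ast}(\phi)$ can hit the $a_x,a_y$ of the $\overline\phi(x)a_y^{\ast}a_z^{\ast}a_xa_ya_z$ term, etc.), together with the purely scalar term $6\int v|\phi(x)|^2 a_y^{\ast}a_z^{\ast}a_ya_z$ that appears when a $\phi(x)$ from one step contracts against an $a_x$ that survived it. Iterating, the multiplicities are forced: they must telescope down through $720,360,144/72,36/18,\dots$ consistently, and the final bracket $[A,[A,[A,[A,[A,[A,V]]]]]]$ must be the pure scalar $720\int v|\phi(x)|^2|\phi(y)|^2|\phi(z)|^2$, since after six applications of $A$ no operator remains — and indeed $720=6!$, a useful consistency check, as is the fact that each further commutator annihilates it.

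The only real bookkeeping hazard — and the step I expect to be most error-prone rather than conceptually hard — is keeping the combinatorial coefficients straight when a term with mixed $\phi$'s and $\overline\phi$'s is differentiated, because then both the "$a(\overline\phi)$ hits a dagger" channel and the "$-a^{\ast}(\phi)$ hits an undaggered operator" channel contribute, and one must correctly count how many of the structurally distinct contractions become equal after using the symmetry of $v$; a sign must also be tracked from the $-a^{\ast}(\phi)$ half of $A$, though the $\phi$-terms in the final answers are written with $+$ signs, which is consistent because the relevant terms pick up the sign an even number of times. A clean way to organize this is to treat $[A,\cdot]=[a(\overline\phi),\cdot]-[a^{\ast}(\phi),\cdot]$ as a derivation and note that $[a(\overline\phi),a_z^{\ast}]=\overline\phi(z)$, $[a(\overline\phi),a_x]=0$, $[a^{\ast}(\phi),a_x]=-\phi(x)$, $[a^{\ast}(\phi),a_z^{\ast}]=0$; applying the Leibniz rule term by term then mechanically produces every displayed identity, and one finishes by invoking symmetry of $v$ to merge equal summands. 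I would present the first two brackets in full detail and then remark that the remaining identities follow by the same derivation computation.
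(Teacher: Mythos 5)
Your proposal is correct and is exactly the ``direct calculation'' the paper invokes for this lemma: treat $[A,\cdot]$ as a derivation via the elementary commutators $[a(\overline{\phi }),a_{z}^{\ast }]=\overline{\phi }(z)$ and $[a^{\ast }(\phi ),a_{x}]=-\phi (x)$, apply the Leibniz rule term by term, and use the symmetry of $v$ in its three arguments to merge structurally identical summands into the stated integer multiplicities. The consistency checks you note ($720=6!$ for the sixth commutator, and the vanishing of any further commutator) confirm the bookkeeping, so your approach matches the paper's.
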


\begin{proof}
This is a direct calculation.
\end{proof}

Now, we write $\Psi _{1}(t)=e^{\sqrt{N}A(t)}e^{itH}e^{-\sqrt{N}A(0)}\Omega $
for which we carry out the calculation in the spirit of equation~(3.7) in
Rodnianski-Schlein \cite{Rod-S}.

\begin{proposition}
Say $\phi $ satisfies the Hartree equation 
\begin{equation*}
i\frac{\partial }{\partial t}\phi +\triangle \phi +\frac{1}{2}\phi \int
v(x-y,y-z)\left\vert \phi (y)\right\vert ^{2}\left\vert \phi (z)\right\vert
^{2}dydz=0
\end{equation*}%
then $\Psi _{1}(t)$ satisfies 
\begin{align*}
& \frac{1}{i}\frac{\partial }{\partial t}\Psi _{1}(t)=\bigg(H_{N}+\frac{1}{4!%
}\frac{1}{6}[A,[A,[A,[A,V]]]]+\frac{1}{6}N^{-3/2}[A,V] \\
& +\frac{1}{12}N^{-1}[A,[A,V]]+\frac{1}{36}N^{-\frac{1}{2}}[A,[A,[A,V]]] \\
& -\frac{N}{3}\int v(x-y,y-z)\left\vert \phi (x)\right\vert ^{2}\left\vert
\phi (y)\right\vert ^{2}\left\vert \phi (z)\right\vert ^{2}dxdydz\bigg)\Psi
_{1}(t)~.
\end{align*}
\end{proposition}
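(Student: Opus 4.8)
The plan is to differentiate $\Psi_1(t) = e^{\sqrt N A(t)} e^{itH} e^{-\sqrt N A(0)}\Omega$ directly, using the product rule, and then conjugate the Hamiltonian $H = H_N$ through the Weyl-type operator $e^{\sqrt N A(t)}$. Writing $\frac{1}{i}\partial_t \Psi_1 = \big( -\sqrt N \, \frac{1}{i}\dot A(t) \big)\Psi_1 + e^{\sqrt N A}\, H\, e^{-\sqrt N A}\, \Psi_1$, everything reduces to two computations: (i) expanding $e^{\sqrt N A} H_N e^{-\sqrt N A}$ via the BCH/Hadamard formula, and (ii) showing that the terms linear in creation/annihilation operators that arise there cancel against $-\sqrt N\,\frac{1}{i}\dot A(t)$ precisely when $\phi$ solves the Hartree equation \eqref{5-Hartree}, up to the scalar term $-\frac{N}{3}\int v|\phi|^2|\phi|^2|\phi|^2$.

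First I would conjugate the kinetic part $H_0 = \int a_x^\ast \Delta a_x\,dx$. Since $A(\phi)$ is linear in $a, a^\ast$ and $[A,[A,H_0]]$ is a scalar, the Hadamard series terminates: $e^{\sqrt N A} H_0 e^{-\sqrt N A} = H_0 + \sqrt N [A,H_0] + \frac{N}{2}[A,[A,H_0]]$, where $[A,H_0]$ is linear in $a^\ast, a$ with coefficient $\Delta\phi$ (and its conjugate), and $\frac{N}{2}[A,[A,H_0]] = N\int \bar\phi\,\Delta\phi$ is a scalar multiple of the identity. Next, the interaction $\frac{1}{6N^2}V$: here $V$ is a product of three creation and three annihilation operators, so each commutator with $A$ lowers the total degree by one, and the Hadamard series terminates after \emph{six} steps. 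This is exactly what Lemma \ref{Hartree-F} has been set up to deliver: the six iterated commutators $[A,V],\dots,[A,[A,[A,[A,[A,[A,V]]]]]]$ are listed explicitly, with the last one a scalar. Thus
\[
e^{\sqrt N A}\frac{V}{6N^2}e^{-\sqrt N A}
= \frac{1}{6N^2}\sum_{j=0}^{6} \frac{N^{j/2}}{j!}\,(\operatorname{ad}_A)^j V,
\]
and I would organize the result by powers of $N$. The $j=6$ term contributes $\frac{1}{6N^2}\cdot\frac{N^3}{6!}\cdot 720\int v|\phi(x)|^2|\phi(y)|^2|\phi(z)|^2 = \frac{N}{6}\int v|\phi|^2|\phi|^2|\phi|^2$; the $j=5$ term is $O(N^{1/2})$ and linear in $a,a^\ast$; the $j=4$ term is $O(1)$ and is exactly $\frac{1}{4!}\frac{1}{6}[A,[A,[A,[A,V]]]]$ as written in the statement; the $j=3,2,1$ terms give $\frac{1}{36}N^{-1/2}[A,[A,[A,V]]]$, $\frac{1}{12}N^{-1}[A,[A,V]]$, $\frac{1}{6}N^{-3/2}[A,V]$ respectively; and $j=0$ gives back $\frac{1}{6N^2}V$, which recombines with $H_0$ into $H_N$.

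The one remaining point — and the only one requiring care — is collecting \emph{all} the terms that are linear in $a,a^\ast$ and verifying they sum to $-\sqrt N\,\frac{1}{i}\dot A(t)$. These come from three sources: $\sqrt N[A,H_0]$ (coefficient $\Delta\phi$), the $j=5$ interaction term (coefficient involving $\frac{1}{6N^2}\cdot\frac{N^{5/2}}{5!}\cdot 360\, v|\phi(x)|^2|\phi(y)|^2\cdot(\bar\phi(z)a_z + \phi(z)a_z^\ast) = \frac{\sqrt N}{2}\int v|\phi|^2|\phi|^2(\bar\phi a + \phi a^\ast)$), and the term $-\sqrt N\,\frac{1}{i}\dot A(t)$ itself, whose coefficient is $\frac{1}{i}\partial_t\phi$ (up to conjugation/sign, since $A(\phi) = a(\bar\phi) - a^\ast(\phi)$). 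Matching the coefficient of $a_z^\ast$ (equivalently $a_z$) forces $\frac{1}{i}\partial_t\phi = \Delta\phi + \frac12\phi\int v|\phi|^2|\phi|^2$, i.e.\ $i\partial_t\phi + \Delta\phi + \frac12\phi\int v(x-y,y-z)|\phi(y)|^2|\phi(z)|^2\,dydz = 0$, which is precisely \eqref{5-Hartree}. Hence under the Hartree equation all linear terms vanish, the scalar terms combine to $-\frac{N}{3}\int v|\phi|^2|\phi|^2|\phi|^2$ (the $\frac{N}{6}$ from the $j=6$ interaction term minus the $\frac{N}{2}$ that would have cancelled the linear terms — in fact one tracks the scalars from $\frac N2[A,[A,H_0]]$ and from the Hartree substitution in the $j=5$ term, which is where the clean $-N/3$ coefficient emerges), and we are left with exactly the displayed identity. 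The main obstacle is purely bookkeeping: keeping the combinatorial factors $1/j!$ from the Hadamard expansion aligned with the numerical prefactors ($3, 6, 36, 18, 72, 360, 720$) appearing in Lemma \ref{Hartree-F}, and being scrupulous about which term's scalar part is "used up" cancelling the linear-in-$a$ pieces versus which survives into the final scalar. No analytic input beyond the algebraic identities of Lemma \ref{Hartree-F} and the termination of the Hadamard series is needed.
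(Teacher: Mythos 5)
Your route is the same as the paper's: expand $(\partial _{t}e^{\sqrt{N}A})e^{-\sqrt{N}A}$ and $e^{\sqrt{N}A}H_{N}e^{-\sqrt{N}A}$ as terminating commutator series, read the iterated commutators off Lemma \ref{Hartree-F}, identify the vanishing of the order-$N^{1/2}$ terms linear in $a,a^{\ast }$ with the Hartree equation \ref{5-Hartree}, and keep the surviving scalar as $-N\chi _{0}$. Your bookkeeping of the $j=0,\dots ,6$ interaction terms (the prefactors $\frac{1}{6N^{2}}\cdot \frac{N^{j/2}}{j!}$, the identification of the $j=4$ term with $\frac{1}{4!}\frac{1}{6}[A,[A,[A,[A,V]]]]$, and the coefficients $\frac{\sqrt{N}}{2}$ and $\frac{N}{6}$ for $j=5,6$) is correct.

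There is, however, a concrete gap in the time-derivative term, and it propagates into your order-$N$ scalar. One has $(\partial _{t}e^{C})e^{-C}=\dot{C}+\frac{1}{2!}[C,\dot{C}]+\cdots $, so with $C=\sqrt{N}A$ this is $\sqrt{N}\dot{A}+\frac{N}{2}[A,\dot{A}]$: the series terminates because $[A,\dot{A}]$ is a c-number, but that c-number is not zero (it equals $-2i\operatorname{Im}\int \overline{\phi }\dot{\phi}$). You wrote this contribution as $-\sqrt{N}\frac{1}{i}\dot{A}$ alone; besides the sign, you have dropped the scalar $\frac{N}{2i}[A,\dot{A}]$, and that scalar is precisely what produces the $-\frac{N}{3}$. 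The correct mechanism, as in the paper, is to read the Hartree equation as the operator identity \ref{H-F}, namely $\frac{1}{i}\dot{A}+[A,H_{0}]+\frac{1}{6\cdot 5!}[A,[A,[A,[A,[A,V]]]]]=0$, and then commute it once more with $A$ to get $\frac{1}{i}[A,\dot{A}]+[A,[A,H_{0}]]+\frac{1}{6\cdot 5!}(\mathrm{ad}_{A})^{6}V=0$; this converts $\frac{N}{2i}[A,\dot{A}]+\frac{N}{2}[A,[A,H_{0}]]$ into $-\frac{N}{2}\int v\left\vert \phi \right\vert ^{2}\left\vert \phi \right\vert ^{2}\left\vert \phi \right\vert ^{2}$, which added to the $\frac{N}{6}\int v\left\vert \phi \right\vert ^{2}\left\vert \phi \right\vert ^{2}\left\vert \phi \right\vert ^{2}$ from $j=6$ gives $-\frac{N}{3}\int v\left\vert \phi \right\vert ^{2}\left\vert \phi \right\vert ^{2}\left\vert \phi \right\vert ^{2}$. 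In your version the kinetic scalar $\frac{N}{2}[A,[A,H_{0}]]=N\operatorname{Re}\int \overline{\phi }\Delta \phi $ has nothing to cancel against --- the $j=5$ term is linear in $a,a^{\ast }$, not scalar, so ``the Hartree substitution in the $j=5$ term'' cannot supply the missing $-\frac{N}{2}$ --- and the stated identity would be off by $\frac{N}{2i}[A,\dot{A}]$ together with that uncancelled kinetic term. The fix is simply to restore $\frac{N}{2}[A,\dot{A}]$ and run the two-level cancellation (order $N^{1/2}$: the Hartree equation itself; order $N$: its commutator with $A$).
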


\begin{proof}
Applying the formulas 
\begin{equation*}
\left( \frac{\partial }{\partial t}e^{C(t)}\right) \left( e^{-C(t)}\right) =%
\dot{C}+\frac{1}{2!}[C,\dot{C}]+\frac{1}{3!}\big[C,[C,\dot{C}]\big]+\ldots
\end{equation*}
\begin{equation*}
e^{C}He^{-C}=H+[C,H]+\frac{1}{2!}\big[C,[C,H]\big]+\ldots ~.
\end{equation*}
to $C=\sqrt{N}A$ we have

\begin{equation}
\frac{1}{i}\frac{\partial }{\partial t}\psi _{1}(t)=L_{1}\psi _{1}~,
\label{r-s}
\end{equation}%
where 
\begin{align*}
& L_{1}=\frac{1}{i}\left( \frac{\partial }{\partial t}e^{\sqrt{N}%
A(t)}\right) e^{-\sqrt{N}A(t)}+e^{\sqrt{N}A(t)}H_{N}e^{-\sqrt{N}A(t)} \\
=& \frac{1}{i}\left( N^{1/2}\dot{A}+\frac{N}{2}[A,\dot{A}]\right)
+H_{N}+N^{1/2}[A,H_{0}]+\frac{N}{2!}[A,[A,H_{0}]]+\frac{1}{6}\bigg(%
N^{-3/2}[A,V] \\
& +\frac{N^{-1}}{2!}[A,[A,V]]+\frac{N^{-\frac{1}{2}}}{3!}[A,[A,[A,V]]] \\
& +\frac{1}{4!}[A,[A,[A,[A,V]]]]+\frac{N^{\frac{1}{2}}}{5!}%
[A,[A,[A,[A,[A,V]]]]] \\
& +\frac{N}{6!}[A,[A,[A,[A,[A,[A,V]]]]]]\bigg)\text{ }.
\end{align*}

The Hartree equation\ref{5-Hartree} is equivalent to setting%
\begin{equation}
\frac{1}{i}\dot{A}+[A,H_{0}]+\frac{1}{6}\frac{1}{5!}[A,[A,[A,[A,[A,V]]]]]=0~.
\label{H-F}
\end{equation}%
Or more explicitly, equation \ref{H-F} is 
\begin{equation*}
a(\overline{i\phi _{t}}+\overline{\triangle \phi }+\frac{1}{2}\overline{\phi 
}\int v_{12,23}\left\vert \phi _{2}\right\vert ^{2}\left\vert \phi
_{3}\right\vert ^{2}dydz)+a^{\ast }(i\phi _{t}+\triangle \phi +\frac{1}{2}%
\phi \int v_{12,23}\left\vert \phi _{2}\right\vert ^{2}\left\vert \phi
_{3}\right\vert ^{2}dydz)=0~,
\end{equation*}%
because of Lemma \ref{Hartree-F} and the fact that $[\Delta
_{x}a_{x},a_{y}^{\ast }]=(\Delta \delta )(x-y)$.

Thus 
\begin{equation*}
\frac{1}{i}[A,\dot{A}]+[A,[A,H_{0}]]+\frac{1}{5!}\frac{1}{6}%
[A,[A,[A,[A,[A,[A,V]]]]]]=0~,
\end{equation*}%
i.e \ref{r-s} simplifies to 
\begin{align}
& \frac{1}{i}\frac{\partial }{\partial t}\Psi _{1}(t)=\bigg(H_{N}+\frac{1}{4!%
}\frac{1}{6}[A,[A,[A,[A,V]]]]+\frac{1}{6}N^{-3/2}[A,V]  \label{terms} \\
& +\frac{1}{12}N^{-1}[A,[A,V]]+\frac{1}{36}N^{-\frac{1}{2}}[A,[A,[A,V]]] 
\notag \\
& -\frac{N}{3}\int v(x-y,y-z)\left\vert \phi (x)\right\vert ^{2}\left\vert
\phi (y)\right\vert ^{2}\left\vert \phi (z)\right\vert ^{2}dxdydz\bigg)\Psi
_{1}(t)~.  \notag
\end{align}
\end{proof}

Write the last term as 
\begin{equation*}
-\frac{N}{3}\int v(x-y,y-z)\left\vert \phi (x)\right\vert ^{2}\left\vert
\phi (y)\right\vert ^{2}\left\vert \phi (z)\right\vert ^{2}dxdydz:=-N\chi
_{0}~,
\end{equation*}%
then the first two terms on the right-hand side of \ref{terms} are the main
ones we need to consider, since the next four terms are $O\left( \frac{1}{%
N^{2}}\right) $, $O\left( \frac{1}{N^{\frac{3}{2}}}\right) ,$ $O\left( \frac{%
1}{N}\right) ,$and $O\left( \frac{1}{N^{\frac{1}{2}}}\right) $.

In order to kill the terms involving "only creation operators" i.e $%
a_{x}^{\ast }a_{y}^{\ast }$ in $\frac{1}{4!}\frac{1}{6}[A,[A,[A,[A,V]]]]$,
we introduce $B$ (see \ref{B-Def}) and let 
\begin{equation}
\Psi =e_{1}^{B}\Psi _{1}  \notag
\end{equation}%
as in \cite{GMM}. So we have 
\begin{equation}
\frac{1}{i}\frac{\partial }{\partial t}\Psi =L\Psi ~,  \notag
\end{equation}%
where

\begin{align*}
& L=\frac{1}{i}\left( \frac{\partial }{\partial t}e^{B}\right)
e^{-B}+e^{B}L_{1}e^{-B} \\
& =L_{Q}+\frac{1}{6N^{2}}e^{B}Ve^{-B}+\frac{1}{6}N^{-3/2}e^{B}[A,V]e^{-B}+%
\frac{1}{12}N^{-1}e^{B}[A,[A,V]]e^{-B} \\
& +\frac{1}{36}N^{-\frac{1}{2}}e^{B}[A,[A,[A,V]]]e^{-B}-N\chi _{0}~,
\end{align*}%
and 
\begin{equation}
L_{Q}=\frac{1}{i}\left( \frac{\partial }{\partial t}e^{B}\right)
e^{-B}+e^{B}\left( H_{0}+\frac{1}{4!}\frac{1}{6}[A,[A,[A,[A,V]]]]\right)
e^{-B}\text{ }.  \label{LQ}
\end{equation}

\section{Equation for $k$}

\label{sec:k-eq}

In this section, we derive the equation for $k$.

Let ${\mathcal{I}}:sp\rightarrow Quad$ be the Lie algebra isomorphism
defined by $Q(d,k,l)={\mathcal{I}}(S(d,k,l))$ in theorem 3 in \cite{GMM},
where $sp$ is the infinite dimensional Lie algebra containing%
\begin{equation*}
S(d,k,l)=\left( 
\begin{matrix}
d & k \\ 
l & -d^{T}%
\end{matrix}%
\right)
\end{equation*}%
and $k$ and $l$ are symmetric, and $Quad$ is the Lie algebra consisting of
quadratics of the form 
\begin{align*}
Q(d,k,l):=& \frac{1}{2}\left( 
\begin{matrix}
a_{x} & a_{x}^{\ast }%
\end{matrix}%
\right) \left( 
\begin{matrix}
d & k \\ 
l & -d^{T}%
\end{matrix}%
\right) \left( 
\begin{matrix}
-a_{y}^{\ast } \\ 
a_{y}%
\end{matrix}%
\right) \\
& =-\int d(x,y)\frac{a_{x}a_{y}^{\ast }+a_{y}^{\ast }a_{x}}{2}\ dx\,dy+\frac{%
1}{2}\int k(x,y)a_{x}a_{y}\ dx\,dy \\
& -\frac{1}{2}\int l(x,y)a_{x}^{\ast }a_{y}^{\ast }\ dx\,dy\text{ }.
\end{align*}

Recall that 
\begin{eqnarray*}
&&\frac{1}{4!6}[A,[A,[A,[A,V]]]] \\
&=&\frac{1}{2}\int v(x-y,y-z)\left\vert \phi (x)\right\vert ^{2}(\overline{%
\phi }(y)\overline{\phi }(z)a_{y}a_{z}+\phi (y)\phi (z)a_{y}^{\ast
}a_{z}^{\ast })dxdydz \\
&&+\int v(x-y,y-z)\left\vert \phi (x)\right\vert ^{2}\overline{\phi }(y)\phi
(z)a_{z}^{\ast }a_{y}dxdydz \\
&&+\frac{1}{2}\int v(x-y,y-z)\left\vert \phi (x)\right\vert ^{2}\left\vert
\phi (y)\right\vert ^{2}a_{z}^{\ast }a_{z}dxdydz\text{ ,}
\end{eqnarray*}%
we write%
\begin{equation*}
G=\left( 
\begin{matrix}
g & 0 \\ 
0 & -g^{T}%
\end{matrix}%
\right) \qquad \mbox{
and}\qquad M=\left( 
\begin{matrix}
0 & m \\ 
-\overline{m} & 0%
\end{matrix}%
\right)
\end{equation*}%
with 
\begin{equation*}
g=-\triangle \delta _{12}-\bigg(\int v_{12,23}\left\vert \phi
_{3}\right\vert ^{2}dz\bigg)\overline{\phi }_{1}\phi _{2}-\frac{1}{2}\bigg(%
\int v_{12,23}\left\vert \phi _{2}\right\vert ^{2}\left\vert \phi
_{3}\right\vert ^{2}dydz\bigg)\delta _{12}
\end{equation*}%
and%
\begin{equation*}
m=\bigg(\int v_{12,23}\left\vert \phi _{3}\right\vert ^{2}dz\bigg)\overline{%
\phi }_{1}\overline{\phi }_{2}\text{ .}
\end{equation*}

Though it's true that 
\begin{equation*}
B={\mathcal{I}}(K)\text{ },
\end{equation*}%
for 
\begin{equation}
K=\left( 
\begin{matrix}
0 & k(t,x,y) \\ 
\overline{k}(t,x,y) & 0%
\end{matrix}%
\right) \text{ },  \label{K-Def}
\end{equation}%
it's not quite true that

\begin{equation*}
H_{0}+\frac{1}{4!}\frac{1}{6}[A,[A,[A,[A,V]]]]={\mathcal{I}}(G+M)\text{ .}
\end{equation*}%
However, the commutators of ${\mathcal{I}}(G+M)$ and $H_{0}+\frac{1}{4!}%
\frac{1}{6}[A,[A,[A,[A,V]]]]$ with $B$ are the same as in the discussion in
page 15 in \cite{GMM}.

Now, $L_{Q}$ reads 
\begin{eqnarray*}
L_{Q} &=&\frac{1}{i}\left( \frac{\partial }{\partial t}e^{B}\right)
e^{-B}+e^{B}\left( H_{0}+\frac{1}{4!}\frac{1}{6}[A,[A,[A,[A,V]]]]\right)
e^{-B} \\
&=&H_{G}+{\mathcal{I}}\left( \left( \frac{1}{i}\frac{\partial }{\partial t}%
e^{K}\right) e^{-K}+[e^{K},G]e^{-K}+e^{K}Me^{-K}\right) \\
&=&H_{G}+{\mathcal{I}}(\mathcal{M}_{1}+\mathcal{M}_{2}+\mathcal{M}_{3})~,
\end{eqnarray*}%
where 
\begin{equation*}
H_{G}=H_{0}+\int v_{12,23}\left\vert \phi _{3}\right\vert ^{2}\overline{\phi 
}_{2}\phi _{1}a_{x}^{\ast }a_{y}dxdydz+\frac{1}{2}\int v_{12,23}\left\vert
\phi _{2}\right\vert ^{2}\left\vert \phi _{3}\right\vert ^{2}a_{x}^{\ast
}a_{x}dxdydz,
\end{equation*}%
in this setting.

Then by the definition of the isomorphism ${\mathcal{I}}$, the coefficient
of $a_{x}a_{y}$ is $-(\mathcal{M}_{1,12}+\mathcal{M}_{2,12}+\mathcal{M}%
_{3,12}),$ and the coefficient of $a_{x}^{\ast }a_{y}^{\ast }$ is $(\mathcal{%
M}_{1,21}+\mathcal{M}_{2,21}+\mathcal{M}_{3,21}).$ To write it explicitly:%
\begin{align}
& -(\mathcal{M}_{1,12}+\mathcal{M}_{2,12}+\mathcal{M}_{3,12})  \label{coeff}
\\
& =\overline{(\mathcal{M}_{1,21}+\mathcal{M}_{2,21}+\mathcal{M}_{3,21})} 
\notag \\
=& (i\sinh (k)_{t}+\sinh (k)g^{T}+g\sinh (k))\overline{\cosh (k)}-  \notag \\
& (i\cosh (k)_{t}-[\cosh (k),g])\sinh (k)  \notag \\
& -\sinh (k)\overline{m}\sinh (k)-\cosh (k)m\overline{\cosh (k)}~.  \notag
\end{align}%
Setting formula \ref{coeff} to $0$ gives the equation \ref{k-equation}.
Hence we have the following theorem similar to Corollary 1 in \cite{GMM}.

\begin{theorem}
\label{maincor} If $\phi $ and $k$ solve \ref{5-Hartree} and \ref{k-equation}%
, then the coefficients of $a_{x}a_{y}$ and $a_{x}^{\ast }a_{y}^{\ast }$
drop out and $L_{Q}$ becomes 
\begin{align*}
L_{Q}=& H_{0}+\int v_{12,23}\left\vert \phi _{3}\right\vert ^{2}\overline{%
\phi }_{2}\phi _{1}a_{x}^{\ast }a_{y}dxdydz+\frac{1}{2}\int
v_{12,23}\left\vert \phi _{2}\right\vert ^{2}\left\vert \phi _{3}\right\vert
^{2}a_{x}^{\ast }a_{x}dxdydz \\
-& \int d(t,x,y)\frac{a_{x}a_{y}^{\ast }+a_{y}^{\ast }a_{x}}{2}\ dxdy~,
\end{align*}%
where $d$ is given by formula \ref{D} and the full operator reads 
\begin{align*}
L=& H_{0}+\int v_{12,23}\left\vert \phi _{3}\right\vert ^{2}\overline{\phi }%
_{2}\phi _{1}a_{x}^{\ast }a_{y}dxdydz+\frac{1}{2}\int v_{12,23}\left\vert
\phi _{2}\right\vert ^{2}\left\vert \phi _{3}\right\vert ^{2}a_{x}^{\ast
}a_{x}dxdydz- \\
& \int d(t,x,y)a_{y}^{\ast }a_{x}dx+\frac{1}{6N^{2}}e^{B}Ve^{-B}+\frac{1}{%
6N^{3/2}}e^{B}[A,V]e^{-B}+\frac{1}{12N}e^{B}[A,[A,V]]e^{-B} \\
& +\frac{1}{36N^{\frac{1}{2}}}e^{B}[A,[A,[A,V]]]e^{-B}-N\chi _{0}-\chi _{1}
\\
& :=\widetilde{L}-N\chi _{0}-\chi _{1}~,
\end{align*}%
and 
\begin{equation*}
\chi _{0}=\frac{1}{3}\int v(x-y,y-z)\left\vert \phi (x)\right\vert
^{2}\left\vert \phi (y)\right\vert ^{2}\left\vert \phi (z)\right\vert
^{2}dxdydz,
\end{equation*}%
\begin{equation*}
\chi _{1}(t)=-\frac{1}{2}\int d(t,x,x)dx~.
\end{equation*}%
This matches equation \ref{New L tutar}.
\end{theorem}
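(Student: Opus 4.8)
The plan is to establish Theorem~\ref{maincor} by tracking exactly which operators survive in $L_Q$ once the equations for $\phi$ and $k$ are imposed, and then to reassemble the full operator $L$ from the decomposition already obtained at the end of Section~\ref{sec:hartree}. First I would recall from the previous computation that
\[
L_Q = H_G + {\mathcal{I}}\big(\mathcal{M}_1+\mathcal{M}_2+\mathcal{M}_3\big),
\]
where $\mathcal{M}_1 = \big(\tfrac{1}{i}\partial_t e^K\big)e^{-K}$, $\mathcal{M}_2 = [e^K,G]e^{-K}$, and $\mathcal{M}_3 = e^K M e^{-K}$, and $H_G$ is the explicit quadratic-in-$a^\ast a$ operator displayed before the theorem. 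The key observation is that under the isomorphism ${\mathcal{I}}$, a $2\times 2$ block matrix $S(d,k,l)$ maps to $Q(d,k,l)$, so the $(1,1)$ and $(2,2)$ blocks of $\mathcal{M}_1+\mathcal{M}_2+\mathcal{M}_3$ produce the $a^\ast a$ (and $aa^\ast$) terms through the diagonal $d$-part, while the $(1,2)$ and $(2,1)$ blocks produce precisely the $a_xa_y$ and $a_x^\ast a_y^\ast$ terms with coefficients $-(\mathcal{M}_{1,12}+\mathcal{M}_{2,12}+\mathcal{M}_{3,12})$ and $(\mathcal{M}_{1,21}+\mathcal{M}_{2,21}+\mathcal{M}_{3,21})$ respectively, as recorded in \eqref{coeff}.

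Second, I would invoke equation~\eqref{k-equation}: setting \eqref{coeff} to zero is by construction equivalent to the Grillakis–Machedon–Margetis equation, so the off-diagonal blocks of $\mathcal{M}_1+\mathcal{M}_2+\mathcal{M}_3$ vanish identically, and the pairing terms $a_xa_y$, $a_x^\ast a_y^\ast$ drop out of $L_Q$. What remains is $H_G$ together with the diagonal contribution $-\int d(t,x,y)\tfrac{a_xa_y^\ast + a_y^\ast a_x}{2}\,dxdy$; the point to verify here is that the diagonal block $d$ produced by $\mathcal{M}_1+\mathcal{M}_2+\mathcal{M}_3$ is exactly the $d(t,x,y)$ of formula~\eqref{D}. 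This is the same algebra as on page~15 of \cite{GMM} — the only change being that $g$ and $m$ now carry the 3-body expressions — and since the remark before the theorem already asserts that the commutators of ${\mathcal{I}}(G+M)$ with $B$ agree with those in \cite{GMM}, the diagonal block is computed by the identical formula, giving \eqref{D} verbatim. This yields the stated expression for $L_Q$.

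Third, I would assemble the full $L$: from the end of Section~\ref{sec:hartree} we have $L = L_Q + \tfrac{1}{6N^2}e^BVe^{-B} + \tfrac{1}{6}N^{-3/2}e^B[A,V]e^{-B} + \tfrac{1}{12}N^{-1}e^B[A,[A,V]]e^{-B} + \tfrac{1}{36}N^{-1/2}e^B[A,[A,[A,V]]]e^{-B} - N\chi_0$, and substituting the just-derived $L_Q$ gives the displayed formula for $L$, where I absorb the $c$-number coming from normal-ordering $-\int d(t,x,y)\tfrac{a_xa_y^\ast+a_y^\ast a_x}{2}\,dxdy$ into $-\int d\,a_y^\ast a_x - \tfrac12\int d(t,x,x)\,dx$ and define $\chi_1(t) = -\tfrac12\int d(t,x,x)\,dx$. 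The identification $\chi_0 = \tfrac13\int v(x-y,y-z)|\phi(x)|^2|\phi(y)|^2|\phi(z)|^2\,dxdydz$ is read directly off the last term of \eqref{terms} as defined immediately after that display. Finally I would note that the resulting $\widetilde{L} = L + N\chi_0 + \chi_1$ is term-by-term identical to \eqref{New L tutar}, closing the argument.

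The main obstacle is the bookkeeping in the second step: one must be certain that the diagonal $d$-block emerging from $\mathcal{M}_1+\mathcal{M}_2+\mathcal{M}_3$ really coincides with \eqref{D} and not some sign- or transpose-variant, and in particular that the extra $a^\ast a$ pieces hidden inside $H_0 + \tfrac{1}{4!6}[A,[A,[A,[A,V]]]]$ (which, unlike in the 2-body case, is \emph{not} equal to ${\mathcal{I}}(G+M)$ on the nose) contribute nothing beyond what is already packaged into $H_G$. The saving grace, as the paper points out, is that only the \emph{commutators} with $B$ enter $L_Q$, and those commutators are unchanged from \cite{GMM}; so the discrepancy between $H_0 + \tfrac{1}{4!6}[A,[A,[A,[A,V]]]]$ and ${\mathcal{I}}(G+M)$ is a $B$-commuting remainder that is exactly $H_G - H_0$ minus the $H_0$ already present, and tracking this carefully is what makes the proof go through.
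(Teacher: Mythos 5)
Your proposal follows essentially the same route as the paper: the paper's "proof" of Theorem~\ref{maincor} is precisely the derivation in Section~\ref{sec:k-eq} — writing $L_Q=H_G+{\mathcal{I}}(\mathcal{M}_1+\mathcal{M}_2+\mathcal{M}_3)$, identifying the off-diagonal blocks with \eqref{coeff} so that equation~\eqref{k-equation} kills the $a_xa_y$ and $a_x^\ast a_y^\ast$ terms, noting the diagonal block reproduces \eqref{D} because the commutators with $B$ agree with page~15 of \cite{GMM}, and then normal-ordering to extract $\chi_1$. Your handling of the subtlety that $H_0+\tfrac{1}{4!6}[A,[A,[A,[A,V]]]]\neq{\mathcal{I}}(G+M)$ is exactly the paper's, so the argument is correct and not materially different.
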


Notice that 
\begin{eqnarray*}
\widetilde{L}\Omega &=&\bigg(\frac{1}{6N^{2}}e^{B}Ve^{-B}+\frac{1}{6}%
N^{-3/2}e^{B}[A,V]e^{-B}+ \\
&&\frac{1}{12}N^{-1}e^{B}[A,[A,V]]e^{-B}+\frac{1}{36}N^{-\frac{1}{2}%
}e^{B}[A,[A,[A,V]]]e^{-B}\bigg)\Omega ~,
\end{eqnarray*}%
in our case. This ends the proof of theorem \ref{main}.

The verification that the hypotheses of the main theorem are satisfied by
some singular potentials will be provided in a subsequent article.


\begin{thebibliography}{9}
\bibitem{GMM} Grillakis, M. G., Machedon, M, Margetis, D.: Second order
corrections to mean field evolution for weakly interacting Bosons. I.
Preprint math-ph arXiv:0904.0158

\bibitem{TChenAndNP} Chen, T., Pavlovi\'{c}, N.,:The quintic NLS as the mean
field limit of a Boson gas with three-body interactions. Preprint math-ph
arXiv:0812.2740

\bibitem{Rod-S} Rodnianski, I., Schlein, B.: {Quantum fluctuations and rate
of convergence towards mean field dynammics}. Preprint math-ph
arXiv:0711.3087
\end{thebibliography}
\end{document}